\documentclass[A4,12pt]{article}
\usepackage{amsmath,amssymb,amsfonts,amsthm,graphicx}
\usepackage{multirow}
\usepackage{diagbox}
\usepackage{color}
\usepackage{cite}
\usepackage{epsfig}
\usepackage{epstopdf}
\usepackage{footnote}
\usepackage{cite}
\usepackage{caption}
\usepackage{subcaption}         
\usepackage{graphicx}
\usepackage{enumerate}
\usepackage{times}
\usepackage{algorithm}
\usepackage{algpseudocode}
\usepackage{caption}

\usepackage[top=2cm, bottom=2cm, left=2cm, right=2cm]{geometry}

\newtheorem{theorem}{Theorem}[section]
\newtheorem{lemma}[theorem]{Lemma}

\theoremstyle{definition}
\newtheorem{definition}[theorem]{Definition}

\newtheorem{example}[theorem]{Example}

\theoremstyle{remark}

\title{An Explicit Construction of Optimal Dominating and $[1,2]-$Dominating Sets in Grid}
\author{P. Sharifani$^{1}$, M.R. Hooshmandasl$^{2}$, M. Alambardar Meybodi$^{3}$ \\
\footnotesize{$^{1,2,3}$Department of Computer Science, Yazd University, Yazd, Iran.}\\
\footnotesize{$^{1,2,3}$The Laboratory of Quantum Information Processing, Yazd University, Yazd, Iran.}  \\
\footnotesize{e-mail:$^1$ pouyeh.sharifani@gmail.com, $^2$ hooshmandasl@yazd.ac.ir, $^3$ 	m.alambardar@stu.yazd.ac.ir}}
\date{}

\begin{document}

	\maketitle
	
	\begin{abstract}		
		A dominating set in a graph $G$ is a subset of vertices $D$ such that
		every vertex in $V\setminus D$ is a neighbor of some vertex of $D$. The domination number of $G$  is the minimum size of a dominating set of $G$ and it is denoted by $\gamma(G)$. Also, a subset $D$ of a graph $G$ is a $[ 1 , 2 ] $-set if, each vertex $v \in V \setminus D$ is adjacent to either one or two vertices in $D$ and the minimum cardinality of $[ 1 , 2 ] $-dominating set of $G$, is denoted by $\gamma_{[1,2]}(G)$.
		 Chang's conjecture says that for every $16 \leq m \leq n, \gamma(G_{m,n})= \left \lfloor\frac{(n+2)(m+2)}{5}\right \rfloor-4$ and this conjecture has been proven by Goncalves et al. This paper presents an explicit constructing method to find an optimal dominating set for grid graph $G_{m,n}$ where $m,n\geq 16$ in $O(\text{size of answer})$. In addition, we will show that $\gamma(G_{m,n})=\gamma_{[1,2]}(G_{m,n})$ where $m,n\geq 16$ holds in response to an open question posed by Chellali et al.
		  \newline
		 		
		\noindent\textbf{Keywords:} Grid Graph; Dominating Set; $[1,2]$-Dominating Set; NP-complete; Dynamic Programing.
	\end{abstract}

\section{Introduction}
The concept of domination and dominating set is a well-studied topic in graph theory and has
many extensions and applications.  A discussion of some of these can be found in \cite{haynes1998fundamentals, haynes1997domination}. The solution of
many variations of domination problems have shown to be NP-complete\cite{johnson1985np,masuyama1981computational,lichtenstein1982planar,attalah2013connected,chellali20131,chellali2014independent}. Also, many
algorithmic results have studied for these problems in different classes of graphs.
A subset $S$ of vertices is a dominating set if every vertex not in $S$ has at least one neighbor in $S$.
A dominating set with minimum cardinality is called an
optimal dominating set of a graph $G$; its cardinality is called
the domination number of $G$ and is denoted by $\gamma(G)$. Note
that although the domination number of a graph, $\gamma(G)$, is
unique, there may be different optimal dominating sets.
Grid graphs are a special class of graphs and the dominating set of them have many applications in robotics and sensor networks. 
Due to the special structure of grids, their domination number can be determined optimally, although this number was known recently by Goncalves et al., \cite{gonccalves2011domination}. They proved that for $m\times n$ grids, where $m,n \geq 16$, 
$$\gamma(G_{m,n})= \left \lfloor\frac{(n+2)(m+2)}{5}\right \rfloor-4.$$

 The idea behind their proof is using a dynamic programming method to store all dominating sets that occur in borders of a grid.
Various attempts have been made in recent years to find an algorithm for the optimal dominating set. 
Chang \cite{chang1994domination}, by using diagonalization and projection, constructed a dominating set for grids in polynomial-time, such that 
$$\gamma(G_{m,n})\leq \left \lceil \frac{(n+2)(m+2)}{5}\right\rceil.$$

The cardinality of dominating set constructed by Chang's method is at most $\gamma(G_{m,n}) + 5$, when $16 \leq m \leq n$.

In \cite{alanko2011computing}, Alanko et al, used brute-force computational
technique to find optimal dominating set in grids of size up
to $n = m = 29$.

Fata et al. \cite{fata2013distributed} presented a distributed algorithm for
finding near optimal dominating sets on grids. The size of the dominating set
provided by their algorithm is upper-bounded by
$\left \lceil \frac{(n+2)(m+2)}{5} \right \rceil$
for $m\times n$ grids and its difference from the optimal
domination number of the grid is upper-bounded by five.

P. Pisantechakool et al. \cite{pisantechakool2015new} improved upon the  distributed algorithm of Fata et al. and presented a new distributed algorithm that computes
a dominating set of size $\left \lceil\frac{(n+2)(m+2)}{5}\right \rceil-3$ on an $m\times n$ grid, $8\leq m , n$ and its difference from the optimal
domination number of the grid is upper-bounded by two.

 There are numerous intermediate results for minimal dominating set and $\gamma(G_{m,n})$ for small values of $n$ and $m$ by a dynamic programming algorithm \cite{hare1986algorithms,hare1995application,livingston1994constant,ma1990partition,singh1987parallel,vzerovnik1999deriving}.
 

The most of those algorithms are not efficient in practice when the values $n$ or $m$ be over 20. 

One of the interesting issues that can be expressed in different types of domination problems is that under which conditions the domination number of graph is equal to the domination number of that particular domination problem \cite{henning2013graphs, hansberg2007characterization,chellali2012trees}.

A dominating set like $D$ of graph $G(V,E)$, called $[1,2]$-dominating set, if each vertex $v \in V \setminus D$ is adjacent to at most two vertices in $D$. The concept of $[1, 2]$- dominating set, as a special case of $[\rho,\sigma]$-dominating set \cite{telle1994complexity}, is introduced by Chellali et al\cite{chellali20131}. They studied $[ 1, 2 ]$- dominating sets in graphs and posed a number of open problems. Some of those problems are solved in \cite{yang20141,goharshady20161}. One of the proposed questions is:\\

{\bf Question:} Is it true for grid graphs $G_{m,n}$ that $\gamma(G_{m,n})=\gamma_{[1,2]}(G_{m,n})$?\\

We show that the answer to this question is positive by constructing a $\gamma$-set for $G_{m,n}$ which is also a $\gamma_{[1,2]}$-set.

The main result of this paper is a construction  to find an optimal dominating set for grid  $G_{m,n}$ where $m,n\geq 16$.
The rest of this paper proceeds as follows. In Section 2 we describe some notations and definitions are needed. In section 3, we present our construction and the correctness argument for it. Also to ease of understanding we illustrate some examples. Finally, in section 4 we prove that $\gamma(G_{m,n})=\gamma_{[1,2]}(G_{m,n})$ for $m,n\geq 16$.

\section{Terminology}
In this section, we introduce some definitions and notations that will be needed in the sequel.  For all terminologies and notations are not defined here, we refer to \cite{bondy2008graph}.
Let $G=(V,E)$ be a simple graph, the neighborhood of a vertex $v \in V$ is the set of all vertices adjacent to $v$ and is denoted by $N(v)$, i.e. $N(v) = \left\{ u \in V \vert uv \in E \right\}$. The closed neighborhood of a vertex $v$ is defined $N[v] = N(v) \cup \{v \}$. A set $S$ is called a dominating set of $G$ if every vertex is either in $S$ or adjacent to a vertex in $S$. The size of the smallest dominating sets of a graph $G$ is denoted by $\gamma (G)$. Any such set is called a $\gamma$-set or minimal dominating set of $G$.

A set $S \subseteq V$ is called a $[1, 2]$-set of $G$ if for each $v \in V \setminus S$ we have $1 \leq \vert N(v) \cap S \vert \leq 2$, i.e. $v$ is adjacent to at least one but not more than two vertices in $S$. The size of the smallest $[1, 2]$-sets of $G$ is denoted by $\gamma_{[1, 2]} (G)$. Any such set is called a $\gamma_{[1, 2]}$-set of $G$. 
	We know that  for every graph $G$, $\gamma(G) \leq \gamma_{[1,2]}(G)\leq n$ \cite{chellali20131}. In some classes of graphs the domination number and $[1,2]$-domination number are equal.  This equality holds for cycles, caterpillars, claw free graphs, $P_4$-free graphs and nontrivial graph $G$ with $\Delta(G)\geq |V(G)|-3$, are proved in \cite{chellali20131}.
	
An $m\times n$ grid graph $G_{m,n} = (V;E)$ has vertex set $V = \{v_{i,j}:\,\, 1 \leq i \leq  m ,\, 1 \leq  j \leq n \}$ and edge set $E = \{(v_{i,j},v_{i,j'}):\,\, \vert j-j' \vert =1\} \cup \{(v_{i,j},v_{i',j}):\,\, \vert i-i' \vert =1\}$.
For ease of exposition, we will fix an orientation and labeling of the vertices, so
that vertex $v_{1,1}$ is the upper-left vertex and vertex $v_{m,n}$ is
the lower-right vertex of the grid.

We also require the following definitions.
\begin{definition}
 The boundary of grid $G_{m,n}$, denoted by $B(G)$, is the set of vertices like $v\in V$ such that $|N(v)|<4.$
\end{definition} 

\begin{definition}
A sub-grid of $G_{m,n}$ is induced graph by vertices $V = \{v_{i,j}:\,\, 2 \leq i \leq  m-1 ,\, 2 \leq  j \leq n-1 \}\setminus \{v_{2,2},v_{2,n-1},v_{m-1,2},v_{m-1,n-1}\}$ (see Figure \ref{subgrid}).
\end{definition}

\begin{figure}[h!]
	\centering
	\includegraphics[scale=0.7]{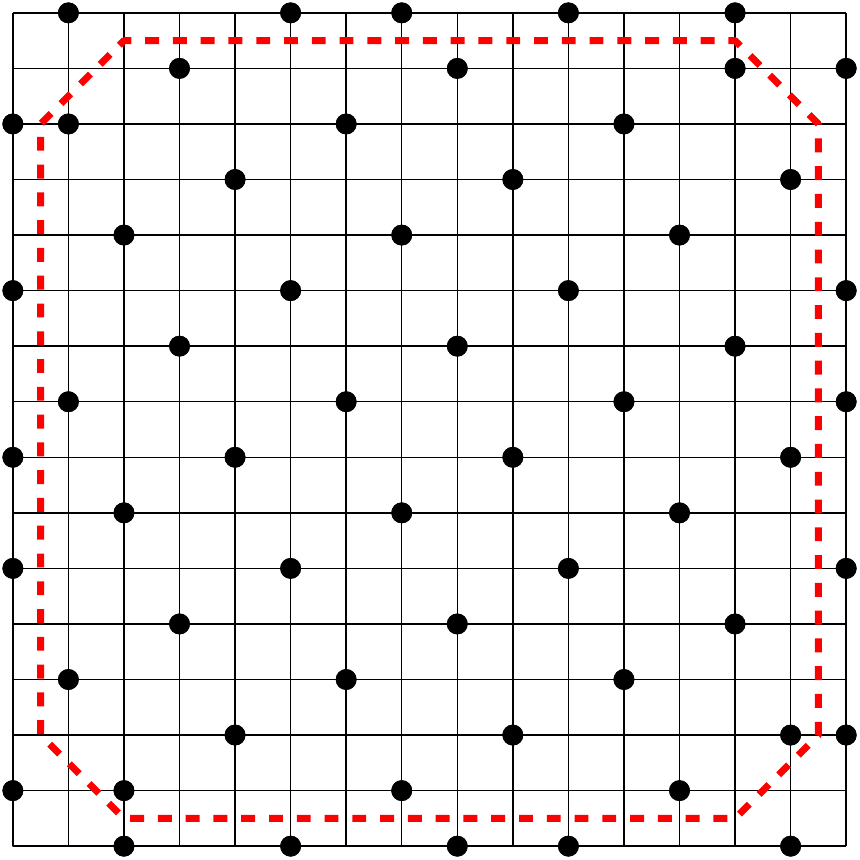}
	\caption{Example of dominating set of size 60 for grid $G_{16,16}$ and  its sub-grid 
	is highlighted by a red dashed line. The intersection of each row and column is a vertex of grid}\label{subgrid}
\end{figure}

\section{Construction of Dominating Set in Grid}\label{construct}

 The idea behind our method is choosing a proper pattern that dominates every vertex in sub-grid exactly once. Our purpose is selecting most of the vertices of dominating set from sub-grid as soon it is possible because every vertex in sub-grid dominates at most five vertices (its four neighbors and itself). These selected vertices are indicated by black disks.  Since, just by selecting vertices of sub-grid, the boundary vertices of the grid may be not dominated, then we have to add some vertices of the boundary, indicated by white squares, to the dominating set. To do so, we identify vertices of the optimal dominating set by two following steps:

{\bf Step 1:}  Identifying domination points (black disks) to dominate the vertices of sub-grid and some of the vertices of the boundary.

{\bf Step 2:}  Identifying domination points (white squares) to dominate the boundary vertices which are not dominated by black disks. 

In the step 1, according to the number of columns, we select the index of first appropriate column that a black disk  must be located in row $p$, denoted $a_p$. In other words, the vertex $v_{p,a_p}$ is the first position in row $p$ that a domination point is located.

For the first row, $a_p$ is defined as
\[
a_1=\left\{\begin{array}{lll}
2 &  & \text{if}\,\,  n\equiv_5 0,  \\
n \mod 5    & & \text{other wise,}
\end{array}\right.
\]
and for other rows, $a_p=(a_1+3(p-1)) \mod 5.$

Then, we construct the set $D_d$ as union of the following sets
\begin{align*}
D_F &=\{v_{1,5k+a_1}\,: \, 3\leq 5k+a_1\leq n-2 \;\; \text{for some}\,\, k\},\\
D_M &=\{v_{p,5k+a_p}\,: \, 2\leq p \leq n-1 \;\; \text{and}\;\; 1\leq 5k+a_1\leq n \;\; \text{for some}\,\, k\},\\
D_L  & =\{v_{m,5k+a_n}\,: \, 3\leq 5k+a_1\leq n-2 \;\; \text{for some}\,\, k\},
\end{align*}
where $D_F,D_M$ and $D_L$ are the sets of all black disks in first row, middle rows and last row respectively.

In the step 2, at first we define the set $A_{k}^{(i,j)}$  as
\[A_{k}^{(i,j)}=\{\, 5t+k\,|\, i\leq t\leq j\,\}.\]
Also, we set $S=\lfloor \frac{n}{5} \rfloor$ and $T=\lfloor \frac{m}{5} \rfloor$.

In order to cover vertices in the borders which are not dominated, the following vertices are added to the dominating set. These vertices are indicated by white squares.

{\bf First Row: } 
Selecting the white squares in first row is just depend on $n$ and is independent from  $m$.
 The set of all white squares of first row is defined by $D_s^{FR}$. This set is selected  as follow:
\[
D_s^{FR}=\left\{\begin{array}{lll}
\{v_{1,p}:\, p\in A_4^{(1,S-1)}\}\cup\{v_{1,3}\} & \text{if }&n\equiv_{_5 }0,  \\ \noalign{\medskip}
\{v_{1,p}:\, p\in A_3^{(1,S-2)}\}\cup\{v_{1,2},v_{1,n-2}\} & \text{if }&n\equiv_{_5 }1,  \\ \noalign{\medskip}
\{v_{1,p}:\, p\in A_4^{(1,S-2)}\}\cup\{v_{1,3},v_{1,n-2}\} & \text{if }&n\equiv_{_5 }2,  \\ \noalign{\medskip}
\{v_{1,p}:\, p\in A_0^{(1,S-1)}\}\cup\{v_{1,n-2}\} & \text{if }&n\equiv_{_5 }3,  \\ \noalign{\medskip}
\{v_{1,p}:\, p\in A_1^{(1,S-1)}\}\cup\{v_{1,n-2}\} & \text{if }&n\equiv_{_5 }4,  \\ \noalign{\medskip}
\end{array}\right.
\]

{\bf Left-Right Columns and Last Row: } 
Selecting the white squares in these borders, aside from the first row, depend on both $m$ and  $n$. The sets of all white squares in the first, last columns and last row are denoted by $D_s^{FC}$, $D_s^{LC}$ and $D_s^{LR}$, respectively. These points are selected  as follow:
			  
\begin{itemize}
\item[{\bf Case 1:}]$n \mod 5=0$
\begin{align*}
D_s^{FC} &=\left\{\begin{array}{lll} 
\{v_{p,1}:\, p\in A_2^{(0,T-2)}\}\cup\{v_{m-3,1}\}   &\text{if }  & m\equiv_{_5 }0, \\   \noalign{\medskip}
\{v_{p,1}:\, p\in A_2^{(0,T-1)}\}\hspace{2.7cm}  &\text{if }  & m\equiv_{_5 }1,2, \\   \noalign{\medskip}
\{v_{p,1}:\, p\in A_2^{(0,T)}\}  &\text{if } & m\equiv_{_5 }3. \\  \noalign{\medskip}
\{v_{p,1}:\, p\in A_2^{(0,T-1)}\} \cup\{v_{m-1,1}\}   &\text{if }  & m\equiv_{_5 }4, \\   \noalign{\medskip}
\end{array}\right.\\
D_s^{LC} &=\left\{\begin{array}{lll}
\{v_{p,n}:\, p\in A_4^{(1,T-1)}\}\cup \{v_{3,n}\} & \text{if }&m\equiv_{_5 }0, 3 , 4,  \\ \noalign{\medskip}
\{v_{p,n}:\, p\in A_4^{(1,T-2)}\}\cup \{v_{3,n}, v_{m-1,n}\} & \text{if }&m\equiv_{_5 }1 ,  \\  \noalign{\medskip}
\{v_{p,n}:\, p\in A_4^{(1,T-2)}\}\cup \{v_{3,n}, v_{m-2,n}\} & \text{if }&m\equiv_{_5 }2 ,  \\  \noalign{\medskip}
\end{array}\right.\\
D_s^{LR} &=\left\{\begin{array}{lll}
\{v_{m,p}:\, p\in A_2^{(0,S-2)}\}\cup \{v_{m,n-2}\} & \text{if }&m\equiv_{_5 }0,  \\ \noalign{\medskip}
\{v_{m,p}:\, p\in A_0^{(1,S-1)}\} & \text{if }&m\equiv_{_5 }1,  \\ \noalign{\medskip}
\{v_{m,p}:\, p\in A_1^{(1,S-1)}\} & \text{if }&m\equiv_{_5 }3,  \\ \noalign{\medskip}
\{v_{m,p}:\, p\in A_3^{(1,S-2)}\}\cup \{v_{m,2},v_{m,n-1}\} & \text{if }&m\equiv_{_5 }2,  \\ \noalign{\medskip}
\{v_{m,p}:\, p\in A_4^{(1,S)}\}\cup \{v_{m,3}\} & \text{if }&m\equiv_{_5 }4,  \\ \noalign{\medskip}
\end{array}\right.\\
\end{align*}

\item[{\bf Case 2:}]$n \mod 5=1$

\begin{align*}
D_s^{FC} &=\left\{\begin{array}{lll} 
\{v_{p,1}:\, p\in A_4^{(1,T-1)}\}\cup \{v_{3,1}\} &  \text{if }& m\equiv_{_5 }0,3,4 \\  \noalign{\medskip}
\{v_{p,1}:\, p\in A_4^{(1,T-2)}\}\cup \{v_{3,1},v_{m-1,1}\} &\text{if } &  m\equiv_{_5 }1, \\  \noalign{\medskip}
\{v_{p,1}:\, p\in A_4^{(1,T-2)}\}\cup \{v_{3,1},v_{m-2,1}\} & \text{if } &  m\equiv_{_5 }2,\\  \noalign{\medskip} 
\end{array}\right.\\
D_s^{LC} &=\left\{\begin{array}{lll}
\{v_{p,n}:\, p\in A_3^{(1,T-2)}\}\cup \{v_{2,n},v_{m-1,n}\} & \text{if }& m\equiv_{_5 }0,  \\ \noalign{\medskip}
\{v_{p,n}:\, p\in A_3^{(1,T-2)}\}\cup \{v_{2,n}\} & \text{if }& m\equiv_{_5 }2,3,  \\ \noalign{\medskip}
\{v_{p,n}:\, p\in A_3^{(1,T-2)}\}\cup \{v_{2,n},v_{m-2,n}\} & \text{if }& m\equiv_{_5 }1,  \\ \noalign{\medskip}
\{v_{p,n}:\, p\in A_3^{(1,T-1)}\}\cup \{v_{2,n},v_{m-2,n}\} & \text{if }& m\equiv_{_5 }4,  \\ \noalign{\medskip}
\end{array}\right.\\
D_s^{LR} &=\left\{\begin{array}{lll}
\{v_{m,p}:\, p\in A_1^{(1,S-1)}\} & \text{if } & m\equiv_{_5 }0,  \\ \noalign{\medskip}
\{v_{m,p}:\, p\in A_4^{(1,S-1)}\}\cup \{v_{m,3},v_{m,n-1}\} & \text{if }& m\equiv_{_5 }1,  \\ \noalign{\medskip}
\{v_{m,p}:\, p\in A_2^{(0,S-1)}\} & \text{if }& m\equiv_{_5 }2,  \\ \noalign{\medskip}
\{v_{m,p}:\, p\in A_0^{(1,S)}\} & \text{if }& m\equiv_{_5 }3,  \\ \noalign{\medskip}
\{v_{m,p}:\, p\in A_3^{(1,S-2)}\}\cup \{v_{m,2},v_{m,n-2}\} & \text{if }& m\equiv_{_5 }4,  \\ \noalign{\medskip}
\end{array}\right.
\end{align*}

\item[{\bf Case 3:}]$n \mod 5=2$

\begin{align*}
D_s^{FC} &=\left\{\begin{array}{lll} 
\{v_{p,1}:\, p\in A_2^{(0,T-2)}\}\cup \{v_{m-2,1}\} \hspace{9mm} &  \text{if } & m\equiv_{_5 }0, \\  \noalign{\medskip}
\{v_{p,1}:\, p\in A_2^{(0,T-1)}\}  &  \text{if} & m\equiv_{_5 }1,2,3, \\  \noalign{\medskip}
\{v_{p,1}:\, p\in A_2^{(0,T-1)}\} \cup \{v_{m-1,1}\}  & \text{if } & m\equiv_{_5 }4, \\  \noalign{\medskip} 
\end{array}\right.\\
D_s^{LC} &=\left\{\begin{array}{lll}
\{v_{p,n}:\, p\in A_3^{(1,T-2)}\}\cup \{v_{2,n}v_{m-1,n}\} \hspace{3mm} & \text{if} & m\equiv_{_5 }0,  \\ \noalign{\medskip}
\{v_{p,n}:\, p\in A_3^{(1,T-2)}\}\cup \{v_{2,n}v_{m-2,n}\} & \text{if} & m\equiv_{_5 }1,  \\ \noalign{\medskip}
\{v_{p,n}:\, p\in A_3^{(1,T-1)}\}\cup \{v_{2,n}\} & \text{if } & m\equiv_{_5 }2,3,  \\ \noalign{\medskip}
\{v_{p,n}:\, p\in A_3^{(1,T)}\}\cup \{v_{2,n}\} & \text{if } & m\equiv_{_5 }4,  \\ \noalign{\medskip}
\end{array}\right.\\
D_s^{LR} &=\left\{\begin{array}{lll}
	\{v_{m,p}:\, p\in A_2^{(0,S-1)}\} & \text{if } & m\equiv_{_5 }0,  \\ \noalign{\medskip}
\{v_{m,p}:\, p\in A_0^{(1,S-1 )}\}\cup \{v_{m,n-1}\} & \text{if } & m\equiv_{_5 }1,  \\ \noalign{\medskip}
\{v_{m,p}:\, p\in A_3^{(1,S-1)}\}\cup \{v_{m,2}\} & \text{if } & m\equiv_{_5 }2,  \\ \noalign{\medskip}
\{v_{m,p}:\, p\in A_1^{(1,S)}\} & \text{if } & m\equiv_{_5 }3,  \\ \noalign{\medskip}
\{v_{m,p}:\, p\in A_4^{(1,S-2)}\}\cup \{v_{m,3},v_{m,n-2}\} & \text{if } & m\equiv_{_5 } 4,  \\ \noalign{\medskip}
\end{array}\right.
\end{align*} 

\item[{\bf Case 4:}]$n \mod 5=3$

\begin{align*}
D_s^{FC} &=\left\{\begin{array}{lll} 
\{v_{p,1}:\, p\in A_0^{(1,T-1)}\}\hspace{2.8cm} & \text{if } & m\equiv_{_5 }0, \\  \noalign{\medskip}
\{v_{p,1}:\, p\in A_0^{(1,T)}\} &\text{if } & m\equiv_{_5 }1,4, \\  \noalign{\medskip}
\{v_{p,1}:\, p\in A_0^{(1,T-1)}\}\cup \{v_{m-1,1}\} &  \text{if } & m\equiv_{_5 }2,\\  \noalign{\medskip}
\{v_{p,1}:\, p\in A_0^{(1,T-1)}\} \cup \{v_{m-2,1}\} & \text{if } & m\equiv_{_5 }3, \\  \noalign{\medskip} 
\end{array}\right.\\
D_s^{LC} &=\left\{\begin{array}{lll}
\{v_{p,n}:\, p\in A_3^{(1,T-2)}\}\cup \{v_{2,n},v_{m-1,n}\} & \text{if } & m\equiv_{_5 }0,  \\ \noalign{\medskip}
\{v_{p,n}:\, p\in A_3^{(1,T-2)}\}\cup \{v_{2,n},v_{m-2,n}\} & \text{if } & m\equiv_{_5 } 1,  \\ \noalign{\medskip}
\{v_{p,n}:\, p\in A_3^{(1,T-1)}\}\cup \{v_{2,n}\} & \text{if } & m\equiv_{_5 }2,3,  \\ \noalign{\medskip}
	\{v_{p,n}:\, p\in A_3^{(1,T)}\}\cup \{v_{2,n}\} & \text{if } & m\equiv_{_5 }4,  \\ \noalign{\medskip}
\end{array}\right.\\
D_s^{LR} &=\left\{\begin{array}{lll}
	\{v_{m,p}:\, p\in A_3^{(1,S-1)}\} \cup \{v_{m,2}\}  \hspace{1.2cm} & \text{if } & m\equiv_{_5 }0,  \\ \noalign{\medskip}
\{v_{m,p}:\, p\in A_1^{(1,S-1)}\}\cup \{v_{m,n-1}\} & \text{if } & m\equiv_{_5 }1,  \\ \noalign{\medskip}
\{v_{m,p}:\, p\in A_4^{(1,S-1)}\}\cup \{v_{m,3}\} & \text{if } & m\equiv_{_5 }2,  \\ \noalign{\medskip}
\{v_{m,p}:\, p\in A_2^{(0,S)}\} & \text{if } & m\equiv_{_5 }3,  \\ \noalign{\medskip}
\{v_{m,p}:\, p\in A_0^{(1,S-1)}\}\cup \{v_{m,n-2}\} & \text{if } & m\equiv_{_5 }4,  \\ \noalign{\medskip}
\end{array}\right.
\end{align*}

\item[{\bf Case 5:}]$n \mod 5=4$

\begin{align*}
D_s^{FC} &=\left\{\begin{array}{lll} 
\{v_{p,1}:\, p\in A_3^{(1,T-2)}\} \cup \{v_{2,1},v_{m-1,1}\} \hspace{1mm} &  \text{if } & m\equiv_{_5 }0, \\  \noalign{\medskip}
\{v_{p,1}:\, p\in A_3^{(1,T-2)}\} \cup \{v_{2,1},v_{m-2,1}\} & \text{if } & m\equiv_{_5 }1, \\  \noalign{\medskip}
\{v_{p,1}:\, p\in A_3^{(1,T-1)}\} \cup \{v_{2,1}\} & \text{if } & m\equiv_{_5 }2,3,4,\\  \noalign{\medskip}
\end{array}\right.\\
D_s^{LC} &=\left\{\begin{array}{lll}
\{v_{p,n}:\, p\in A_3^{(1,T-2)}\} \cup \{v_{2,n},v_{m-1,n}\} & \text{if } & m\equiv_{_5 }0,  \\ \noalign{\medskip}
	\{v_{p,n}:\, p\in A_3^{(1,T-2)}\}\cup \{v_{2,n},v_{m-2,n}\} & \text{if } & m\equiv_{_5 }1,  \\ \noalign{\medskip}
\{v_{p,n}:\, p\in A_3^{(1,T-1)}\}\cup \{v_{2,n}\}  & \text{if } & m\equiv_{_5 }2,3,4,  \\ \noalign{\medskip}
\end{array}\right.\\
D_s^{LR} &=\left\{\begin{array}{lll}
\{v_{m,p}:\, p\in A_4^{(1,S-1)}\} \cup \{v_{m,3}\} \hspace{1.1cm} & \text{if } & m\equiv_{_5 }0,  \\ \noalign{\medskip}
\{v_{m,p}:\, p\in A_2^{(0,S-1)}\}\cup \{v_{m,n-1}\} & \text{if } & m\equiv_{_5 }1,  \\ \noalign{\medskip}
\{v_{m,p}:\, p\in A_0^{(1,S)}\} , & \text{if } & m\equiv_{_5 }2,  \\ \noalign{\medskip}
\{v_{m,p}:\, p\in A_3^{(1,S)}\} \cup \{v_{m,2}\} & \text{if } & m\equiv_{_5 }3,  \\ \noalign{\medskip}
\{v_{m,p}:\, p\in A_1^{(1,S-1)}\}\cup \{v_{m,n-2}\} & \text{if } & m\equiv_{_5 }4,  \\ \noalign{\medskip}
\end{array}\right.
\end{align*}

\end{itemize}
Now, we show that the union of constructed sets builds a dominating set for $G_{m,n}$. We define $D_s$ and $D^{out}$ as follows

\begin{equation}\label{D_s}
D_s=D_s^{FR}\cup D_s^{FC}\cup D_s^{LR}\cup D_s^{LC}
\end{equation}
and
\begin{equation}
D^{out}=D_d\cup D_s.
\end{equation}

In Theorem \ref{Correct}, we will prove that the set $D^{out}$ is an optimal dominating set for $G_{m,n}$.
\begin{example}
The resulting  dominating sets for grids $G_{24,20},G_{24,21},G_{24,22},G_{24,23}$ and $G_{24,24}$  are illustrated in Figure \ref{fig:20*24}-\ref{fig:24*24}.

\end{example}

\subsection{Correctness and Time Complexity}

 We consider two case when the remainder $m$ by five be zero  or not. In the first case, we partition the grid $G_{m,n}$ into $B_1,B_2,\dots B_{T}$ such that every block $B_i$, $1\leq i\leq T$ be a grid $P_5\times P_n$.  In the second case, we divide the grid $G_{m,n}$ into $B_1,B_2,\dots B_{T+1}$ such that every block $B_i$, $1\leq i\leq T$ be a grid $P_5\times P_n$ and  the last block, $B_{T+1}$, is a grid $P_{m-5T}\times P_n$ that is denoted by $B_l$.
We remember that $T=\lfloor \frac{m}{5} \rfloor,  S=\lfloor \frac{n}{5} \rfloor$ and  the blocks are distinct.
 The number of black disks in the blocks is summarized in Table \ref{black disk}.
  
  Also the sum of all white square that locate on boundary of grid are summarized in Table \ref{white square}.

\begin{lemma}\label{domination}
	The set $D^{out}$ is a dominating set for $G_{m,n}$.
\end{lemma}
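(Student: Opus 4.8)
The plan is to verify that every vertex of $G_{m,n}$ lies in $D^{out}$ or has a neighbor in $D^{out}$, exploiting the block decomposition announced just before the lemma. I would separate the vertex set into the \emph{sub-grid interior} and the \emph{boundary} $B(G)$, since the two are handled by the two different construction steps. For the interior, the key observation is that the diagonal pattern defining $a_p=(a_1+3(p-1))\bmod 5$ is exactly the classical ``knight-like'' placement that tiles the plane so that each cell is covered exactly once: within any block $B_i\cong P_5\times P_n$, the black disks of $D_M$ form a set of representatives of the five residue classes as the row index runs $p,\dots,p+4$, so that the closed neighborhoods $N[v]$ of the disks partition the interior. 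I would make this precise by fixing a row $p$ and a column $c$ strictly inside the sub-grid and checking that exactly one of $v_{p,c}$, $v_{p\pm1,c}$, $v_{p,c\pm1}$ has column index $\equiv a_{p'}\pmod 5$ in the appropriate row $p'$; this is a short residue computation using that consecutive rows shift the offset by $3\equiv-2\pmod5$.

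Next I would dispose of the boundary, which is where the case analysis lives. The boundary splits into first row, last row, left column, right column, and the four corners. The first-row white squares $D_s^{FR}$ depend only on $n$, so I would treat the first row as a one-dimensional domination problem: the black disks of $D_F$ already dominate most of row $1$ (each disk $v_{1,5k+a_1}$ covers three consecutive row-$1$ cells plus drops one into row $2$), and $D_s^{FR}$ is designed precisely to plug the residual gaps near the two ends, which is why the formula branches on $n\bmod 5$ and appends explicit endpoint vertices such as $v_{1,3}$ or $v_{1,n-2}$. I would verify each of the five residue branches by computing which row-$1$ columns remain uncovered by $D_F\cup D_M$ and confirming the listed arithmetic-progression sets $A_k^{(i,j)}$ together with the endpoint corrections cover exactly those columns. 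The left/right columns and last row are symmetric in spirit but now depend on both $m$ and $n$, giving the five-by-(up to five) grid of cases in Case~1 through Case~5; for each I would identify the uncovered boundary cells left by $D_d$ and match them against the stated $D_s^{FC},D_s^{LC},D_s^{LR}$.

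The main obstacle will be the sheer bookkeeping of the corner interactions, not any single hard idea. Near each corner three border segments meet, and a white square placed for, say, the last row can simultaneously serve a cell that the left-column formula might also be accounting for; the construction avoids double-counting and avoids gaps only through the carefully chosen endpoint vertices and the exact upper limits $T-2,T-1,T$ (resp. $S-2,S-1,S$) in the index sets $A_k^{(i,j)}$. I would therefore organize the verification so that the corners are checked \emph{last}, after the open edges are known to be dominated, and confirm for each of the $5\times5$ residue pairs $(n\bmod5,\,m\bmod5)$ that the four corner vertices $v_{1,1},v_{1,n},v_{m,1},v_{m,n}$ and their immediate neighbors receive domination. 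Since the first-row set is independent of $m$ while the other three borders depend on both parameters, I expect the cleanest route is to fix $n\bmod 5$ (selecting the $D_s^{FR}$ branch and the outer Case), then sweep over $m\bmod5$, reducing the proof to a finite, mechanical check that each branch's arithmetic progressions reach from one corner region to the other without overlap or omission.
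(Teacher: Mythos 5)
Your plan follows essentially the same route as the paper's proof: for the sub-grid you do the residue computation modulo $5$ using the row shift $3\equiv-2\pmod 5$ (the paper writes this out as four cases $q\equiv a_p+1,\dots,a_p+4\pmod 5$, locating the unique dominating disk to the left, above, below, or right), and for the border you match the uncovered cells against the arithmetic progressions $A_k^{(i,j)}$ by a finite sweep over residue classes. Your boundary verification is, if anything, more complete than what the paper actually writes down: the paper tabulates only the first row (its Table~\ref{first row}), verifies the single instance $n\equiv 0\pmod 5$, and asserts the other three borders follow ``in the same manner,'' whereas you commit to checking all branches and all $5\times 5$ residue pairs.

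There is, however, one concrete hole in your decomposition: splitting $V$ into ``sub-grid interior'' and ``boundary $B(G)$'' does not cover all of $V$. The paper's sub-grid is \emph{defined} to exclude the four degree-four vertices $v_{2,2},v_{2,n-1},v_{m-1,2},v_{m-1,n-1}$, and these are not in $B(G)$ either, so your two cases miss them. They are genuinely exceptional, not an artifact of the definition: the first- and last-row disk sets $D_F$ and $D_L$ are truncated to columns $3\leq 5k+a_1\leq n-2$, so the clean ``each interior cell is covered exactly once'' tiling claim you invoke fails at exactly those four cells --- for example, when $a_1=2$ the pattern would predict a disk at $v_{1,2}$, but that disk is never placed, and one checks that no other black disk dominates $v_{2,2}$, so it must be picked up by a step-2 white square such as $v_{2,1}$ or $v_{1,3}$. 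The paper devotes an explicit paragraph to this (``there are at most four $4$-degree vertices $\{v_{2,2},v_{2,n-1},v_{m-1,2},v_{m-1,n-1}\}$ which may be not dominated by black disks; if one of them is not dominated in step~1, then it is dominated in step~2''). Your corner sweep as stated checks only ``the four corner vertices $v_{1,1},v_{1,n},v_{m,1},v_{m,n}$ and their immediate neighbors,'' which does not include these diagonal cells; to close the gap you should add them to the corner verification list and, in the interior argument, restrict the exact-cover claim to the sub-grid as the paper defines it.
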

\begin{proof}
Let $v_{p,q}\in V$ and $a_p\equiv_{_5} (n +3(p-1))$, where $2\leq p\leq m-1$, be the first column in row $p$ such that a black disk is appeared. If $q= 5k+a_p$, then $v_{p,q}\in D^{out}$.

Let $p\in \{1,m\}$ and  $q\in \{1,2,n,n-1\}$. If $a_p\in \{1,2\} $, then  $v_{p,q}$ is not added to $D^{out}$ in step 1. Therefore,  there are at most four $4-$degree vertices  $\{v_{2,2},v_{2,n-1},v_{m-1,2},v_{m-1,n-1}\}$ which may be not dominated by black disks. If one of them is not dominated in step 1, then it is dominated in step 2.
		
We consider other four cases:		
\begin{itemize}		
	    \item[\bf Case 1:]$q\equiv_{_5} a_p+1$.  
		Since that all $5k'+a_p$ in row $p$ was added to $D^{out}$ then there exist a $k''$ such that $q-1=5k''+a_p$, therefor $v_{p,q-1}\in D^{out}$. 	Also $v_{p-1,q},v_{p+1,q},v_{p,q+1}$  are not selected in step 1, because $q\not\in\{a_{p-1},a_{p+1}, a_p+4\}_{mod\, 5}$.

		 \item[\bf Case 2:]$q\equiv_{_5} a_p+2$. In this case, if $p \geq 1$ then $v_{p-1,q}\in D^{out}$ because 
		\[q\equiv_{_5} a_p+2 \equiv_{_5} a_{p-1}.\]
		Also similar to previous case  $v_{p,q-1},v_{p,q+1},v_{p+1,q}$  are not selected in step 1, because $q\not\in\{a_p+1,a_p+5, a_{p+1}\}_{mod\, 5}$.
		
		\item[\bf Case 3:]$q\equiv_{_5} a_p+3$. In this case, if $p+1 \leq m$ then $v_{p+1,q}\in D^{out}$ because 
		\[q\equiv_{_5} a_p+3\equiv_{_5} a_{p+1}, \]
		and  $v_{p,q-1},v_{p,q+1},v_{p-1,q}$  are not selected in step 1, because $q\not\in\{a_p+2,a_p+4, a_{p-1}\}_{mod\, 5}$.
				
		\item[\bf Case 4:]$q\equiv_{_5} a_p+4$. Since that all $5k'+a_p$ in the row $p$ were added to $D^{out}$, then there exist a $k''$ such that $q+1\equiv_{_5} 5k''+a_p$ and $v_{p,q+1}$ was added to $D^{out}$ and  $v_{p,q-1},v_{p-1,q},v_{p+1,q}$  are not selected in step 1, because $q\not\in\{a_p+3,a_{p-1}, a_{p+1}\}_{mod\, 5}$.

		\end{itemize}
						
		It is clear that every vertex of degree four is dominated by at most one black disk. 
		
		For the boundary vertices, we discuss just the correctness of first row and the argument for the other vertices in boundary is the same manner. 
		In the first row, selection of dominating vertices only  depend on $n$ and independent from $m$. In Table \ref{first row}, we summarize black disks and white squares that are selected according to different $n$.
		\begin{table}[h]
		\caption{Dominating vertices in first row (in the case 1 and 3, $v_{1,1}$ is not dominated and in the case 4 and 5 both of $v_{1,1}$ and $v_{1,n}$ are not dominated)}\label{first row}
		\bgroup
		\def\arraystretch{1.5}
		\begin{center}
			\begin{tabular}{|c|c|c|c|}
				\hline case & n & $\text{black disks}$ & $\text{white squares}$  \\
				\hline
				 1 & $n=5l$ & $\{v_{1,p}:\, p\in A_2^{(1,S-1)}\}$ & $\{v_{1,p}:\, p\in A_4^{(1,S-1)}\}\cup\{v_{1,3}\}$   \\  		
				 \hline  
				 2 & $n=5l+1$ & $\{v_{1,p}:\, p\in A_1^{(1,S-1)}\}$ & $\{v_{1,p}:\, p\in A_3^{(1,S-2)}\}\cup\{v_{1,a},v_{1,n-2}\}$   \\   \hline
				 3 & $n=5l+2$ & $\{v_{1,p}:\, p\in A_2^{(1,S-1)}\}$ & $\{v_{1,p}:\, p\in A_4^{(1,S-2)}\}\cup\{v_{1,a+1},v_{1,n-2}\}$ 
				  \\  \hline 4 &	$n=5l+3$  & $\{v_{1,p}:\, p\in A_3^{(1,S-1)}\}$  & $\{v_{1,p}:\, p\in A_0^{(1,S-1)}\}\cup\{v_{1,n-2}\}$ \\ \hline 
				  5 & $	n=5l+4$ & $\{v_{1,p}:\, p\in A_4^{(1,S-1)}\}$ & $\{v_{1,p}:\, p\in A_1^{(1,S-1)}\}\cup\{v_{1,n-2}\}$   \\ 
				 \hline 
			\end{tabular} 			
		\end{center}
		\egroup
       
        \end{table}
         		
     For instance, let $n\equiv_{_5} 0$. Then 
     $$D^{FR}=D^{FR}_d\cup D^{FR}_s=\{v_{1,p}\,:\, p\in A_2^{1,S-1}\cup A_4^{1,S-1}\}\cup \{v_{1,3}\}.$$
     Therefore, every vertex in the first row except $v_{1,1}$, is either in the set $D^{FR}$ or it is dominated by a vertex in $D^{FR}$ . Since in all cases, the vertex $v_{2,1}$ is selected as a white square, so vertex $v_{1,1}$ also is dominated.
     
     In other cases, $n\not\equiv_{_5} 0$, all vertices in the first row are dominated except maybe $v_{1,1}$ and $v_{1,n}$. In the case 1 and 3, the vertex $v_{2,1}$ in the first column and in the case 4 and 5, the vertices  $v_{2,1}$ and $v_{2,n}$ always are selected in step 2. Hence every vertex in the first row is dominated.  
    
     \end{proof}

  \begin{center}
  	\begin{table}
  		\caption{Number of black disks in Blocks}\label{black disk}
  		\begin{tabular}{l|c|c|c|c|c|c|}
  			
  			\cline{2-7} & $m,n$ & $n=5k$ & $n=5k+1$ & $n=5k+2$ & $n=5k+3$ & $n=5k+4$ \\ 
  			\cline{2-7} First Block&$m\geq 1$ & $5S-2$ & $5S-1$ & $5S$ & $5S+2$ & $5S+3$ \\ 
  			\cline{2-7} Middle Blocks&$m\geq 1$ & $5S$ & $5S+11$ & $5S+2$ & $5S+3$ & $5S+4$ \\
  			\cline{2-7} &$m=5l$ & $5S-2$ & $5S+1$ & $5S+1$ & $5S+2$ & $5S+3$ \\ 
  			&$m=5l+1$ & $S$ & $S-1$ & $S$ & $S$ & $S$ \\ 
  			Last Block  &$m=5l+2$ & $2S-1$ & $2S$ & $2S+1$ & $2S+1$ & $2S+1$ \\ 
  			&$m=5l+3$ & $3S$ & $3S+1$ & $3S+1$ & $3S+1$ & $3S+2$ \\ 
  			&$m=5l+4$ & $4S-1$ & $4S$ & $4S$ & $4S+2$ & $4S+2$ \\ 
  			\cline{2-7} 
  		\end{tabular} 
  		
  	\end{table}
  \end{center}

  \begin{table}
  	\caption{Number of white squares in boundary}\label{white square}
  	\centering
  	\begin{tabular}{|c|c|c|c|c|c|}
  		\hline \diagbox[dir=NW]{n}{m} & $5k$ & $5k+1$ & $5k+2$ & $5k+3$ & $5k+4$  \\  \hline 
  		
  		$n=5l$ & $2T+2S$ & $2T+2S-1$ & $2T+2S$ & $2T+2S-1$ & $2T+2S+2$ \\
  		$n=5l+1$ & $2T+2S-1$ & $2T+2S+1$ & $2T+2S-1$ & $2T+2S-1$  &$2T+2S+1$ \\ 
  		$n=5l+2$ &  $2T+2S+1$ & $2T+2S $& $2T+2S$ & $2T+2S$ & $2T+2S+2$ \\
  		$n=5l+3$  &  $2T+2S-1$ & $2T+2S$ & $2T+2S$ & $2T+2S+1$ &$2T+2S+1$ \\ 
  		$n=5l+4$  & $2T+2S+1$ & $2T+2S+1$ & $2T+2S$ & $2T+2S+1$ & $2T+2S$ \\
  		\hline 
  	\end{tabular}

  \end{table}

      	So,	every vertex of degree four is dominated by at most one black disk.

      \begin{lemma}\label{number}
      Let $D^{out}$ be the output of our method, then
      \[|D^{out}|=\left \lfloor\frac{(n+2)(m+2)}{5}\right \rfloor-4.\]
      \end{lemma}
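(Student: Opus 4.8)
The plan is to use the fact that the black disks $D_d$ and the white squares $D_s$ are disjoint, so that $|D^{out}| = |D_d| + |D_s|$, and then to count each part separately from Tables~\ref{black disk} and~\ref{white square}. Disjointness is forced by the construction: inside any boundary row or column the disks of $D_d$ and the squares of $D_s$ lie in different residue classes modulo $5$ (for instance, in the first row the disks occupy the $A_2$-type positions while the squares occupy the $A_4$-type positions, cf. Table~\ref{first row}), and no white square is ever placed at an interior position already occupied by a disk. Hence it suffices to evaluate $|D_d|$ and $|D_s|$ and add them.

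To compute $|D_d|$ I would use the row-block partition introduced before Lemma~\ref{domination}. Writing $m = 5T + r_m$ and $n = 5S + r_n$ with $r_m = m \bmod 5$ and $r_n = n \bmod 5$, the grid splits into a first block, some middle blocks, and a last block. The delicate bookkeeping is the number of middle blocks: when $r_m = 0$ there are exactly $T$ full blocks, namely one top (first) block, one bottom (last) block, and $T-2$ middle blocks; when $r_m \neq 0$ there are $T$ full blocks plus the partial block $B_l$, giving one first block, $T-1$ middle blocks, and $B_l$ as the last block. Reading the per-block counts from Table~\ref{black disk} and summing,
\[
|D_d| = |B_{\text{first}}| + (\#\text{middle})\cdot|B_{\text{middle}}| + |B_{\text{last}}|,
\]
which in every case is of the form $5ST + r_n T + r_m S + \gamma$ for a constant $\gamma = \gamma(r_m,r_n)$; here the leading $5ST$ and the $r_nT$ term come from the $\Theta(T)$ middle blocks each carrying $5S + r_n$ disks, while the $r_mS$ term comes from the $S$-linear growth of the (possibly partial) last block.

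For $|D_s|$ I would read the value directly off Table~\ref{white square}, which already records it as $2T + 2S + c(r_m,r_n)$ for an explicit constant $c$. Adding the two counts and putting the target in the same shape finishes the reduction: expanding $(m+2)(n+2) = 25TS + 5T(r_n+2) + 5S(r_m+2) + (r_m+2)(r_n+2)$ and dividing by $5$ gives
\[
\left\lfloor\frac{(n+2)(m+2)}{5}\right\rfloor - 4 = 5ST + (r_n+2)T + (r_m+2)S + \left\lfloor\frac{(r_m+2)(r_n+2)}{5}\right\rfloor - 4,
\]
since the first three summands are integers. Because the $2T$ and $2S$ from $D_s$ promote the $r_nT$ and $r_mS$ coefficients of $|D_d|$ to exactly $r_n+2$ and $r_m+2$, the coefficients of $T$ and of $S$ automatically agree with the target, and the whole problem collapses to a residue-level identity: for each of the $25$ pairs $(r_m,r_n)\in\{0,1,2,3,4\}^2$ one must check that the remaining constant $\gamma(r_m,r_n) + c(r_m,r_n)$ equals $\lfloor (r_m+2)(r_n+2)/5\rfloor - 4$.

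The main obstacle is precisely this final case analysis: it is a large but mechanical verification whose only genuine pitfalls are the off-by-one in the middle-block count between $r_m = 0$ and $r_m \neq 0$, and the need to read the first/last-block and white-square entries consistently across the two tables (in particular the entry $5S+11$ in the middle-block row of Table~\ref{black disk} must be read as $5S+1$, so that the five middle-block values follow the uniform pattern $5S+r_n$). Once the coefficients of $T$ and $S$ are seen to match in every case, only a finite check of additive constants remains, which completes the proof.
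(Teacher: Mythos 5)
Your proposal is correct and follows essentially the same route as the paper: both split $|D^{out}|$ into black disks counted blockwise as $D_d(B_1)+(T-2)D_d(B_i)+D_d(B_l)$ when $m\equiv_5 0$ and $D_d(B_1)+(T-1)D_d(B_i)+D_d(B_l)$ otherwise, add the white-square counts from Table~\ref{white square}, and compare with the expanded floor expression residue class by residue class (the paper verifies only the case $m\equiv_5 n\equiv_5 0$ explicitly and declares the rest similar). Your additions --- the explicit disjointness remark, the uniform $5ST+(r_n+2)T+(r_m+2)S+\mathrm{const}$ coefficient-matching that reduces everything to $25$ constant checks, and the observation that the table entry $5S+11$ must be read as $5S+1$ --- are sound refinements of the same argument rather than a different approach.
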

      \begin{proof}
      It is straightforward to investigate  $|D^{out}|=\gamma (G_{m,n})$ as follows.
     
     The black disks of the set $D^{out}$ are divided into three part: black disks in the first block, middle blocks and last block.
	Hence 	
		\[D_d =\left\{ \begin{tabular}{ l l} 
						$ D_d(B_1)+(T-2)D_d(B_i)+D_d(B_l)$, &   \text{if }$m=5k$, \\  \noalign{\medskip}
						$D_d(B_1)+(T-1)D_d(B_i)+D_d(B_l)$, &  \text{otherwise}. 
						\end{tabular} \right. \]	
						
For instance, we show for $m,n$ if they are multiples of five, $D^{out}=D_d\cup D_s$, where $D_s$ is defined in Eq.\eqref{D_s}. The other cases are similar.
		\[D^{out}=D_d\cup D_s=5S-2+(T-2)(5S)+ 5S-2+2T+2S=\left \lfloor\frac{(n+2)(m+2)}{5}\right \rfloor-4.\]
      \end{proof}
      
      \begin{lemma}\label{time}
      	The set $D^{out}$ can be computed in time $O(\text{size of answer})$.
      	\end{lemma}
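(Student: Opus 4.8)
The plan is to show that the construction described in Step~1 and Step~2 can be written out directly, without ever inspecting the whole grid or running any search, so that the total work is proportional to the number of vertices actually placed in $D^{out}$. The key observation is that every piece of the construction is given by a closed-form arithmetic rule: the quantities $a_1$, $S=\lfloor n/5\rfloor$, and $T=\lfloor m/5\rfloor$ are computed in $O(1)$ from $m$ and $n$; the offset $a_p=(a_1+3(p-1))\bmod 5$ for each row is again $O(1)$; and the black-disk sets $D_F$, $D_M$, $D_L$ are just arithmetic progressions in the column index with common difference $5$. First I would argue that, given $a_p$, the black disks of row $p$ are exactly the vertices $v_{p,5k+a_p}$ lying in the admissible index range, and these can be emitted one after another in a single pass by starting at the first valid column and repeatedly adding $5$; this produces each element in $O(1)$ amortized time, so the cost of listing all of $D_d$ is $O(|D_d|)$.

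Next I would treat the white squares of Step~2 identically. Each of the sets $D_s^{FR}$, $D_s^{FC}$, $D_s^{LC}$, $D_s^{LR}$ is selected by first reading off the residue classes $n\bmod 5$ and $m\bmod 5$ (a constant-time table lookup into the five cases and their subcases), and is then a set of the form $\{v_{\cdot,p}:p\in A_k^{(i,j)}\}$ together with at most two explicitly named correction vertices. Since $A_k^{(i,j)}=\{5t+k:i\le t\le j\}$ is again an arithmetic progression, its elements are enumerated in $O(1)$ per element, and the handful of extra vertices such as $v_{1,3}$ or $v_{m,n-2}$ cost $O(1)$ in total. Thus each boundary set is produced in time proportional to its own cardinality, and $|D_s|=O(|D^{out}|)$.

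Combining these, the total running time is the constant cost of the case selection plus a sum, over all rows and over all four boundary strips, of the cost of enumerating one arithmetic progression, which is $O(|D_d|+|D_s|)=O(|D^{out}|)$. By Lemma~\ref{number} this equals $O(\lfloor (n+2)(m+2)/5\rfloor-4)$, the size of the answer, which is exactly the claimed bound. The only subtlety I expect is bookkeeping rather than mathematics: one must be sure that the index ranges in $D_F$, $D_M$, $D_L$ and in the $A_k^{(i,j)}$ never force a scan over empty or out-of-range positions (which would break the per-element $O(1)$ accounting), and that the five-by-five case analysis is genuinely a constant-size dispatch. I would therefore phrase the argument so that each progression is iterated by computing its first admissible term in closed form and incrementing by $5$ until the explicit upper endpoint is exceeded, making the correspondence between emitted vertices and loop iterations exact and the $O(\text{size of answer})$ bound immediate.
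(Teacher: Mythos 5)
Your proposal is correct, and it is in fact substantially more complete than the paper's own proof, which consists of a single sentence: ``By Lemma \ref{number}, $|D^{out}|$ is equal to $\gamma(G_{m,n})$.'' The paper thus only records the fact that the output has the right size, leaving the algorithmic accounting --- why the construction can be \emph{emitted} in time proportional to that size --- entirely implicit. You supply exactly the missing argument: the case dispatch on $n \bmod 5$ and $m \bmod 5$ is a constant-size table lookup; each of $D_F$, $D_M$, $D_L$ and each boundary set $D_s^{FR}, D_s^{FC}, D_s^{LC}, D_s^{LR}$ is an arithmetic progression $A_k^{(i,j)}$ (plus $O(1)$ correction vertices) whose elements are generated in $O(1)$ each by computing the first admissible term in closed form and stepping by $5$; hence the total work is $O(|D_d| + |D_s|) = O(|D^{out}|)$, which by Lemma \ref{number} is the size of the answer. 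Your closing caution --- ensuring the loops never scan empty or out-of-range positions, so the per-element $O(1)$ accounting is exact --- is the right subtlety to flag, and one could add the tiny complementary observation that each vertex index is written in $O(1)$ machine words, so ``size of answer'' measured in vertices and in output length agree up to a constant. In short, you prove what the paper merely asserts, along the only natural route.
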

      	
      	\begin{proof}
      		By lemma \ref{number}, $|D^{out}|$ is equal to $\gamma(G_{m,n})$.
      	\end{proof}
      
     According to Lemmas \ref{domination}, \ref{number} and \ref{time}, we have the following theorem.
\begin{theorem}\label{Correct}
	The set  $D^{out}$ is an optimal dominating set for $G_{m,n}$ and is computed in $O(\text{size of answer})$ time.
	\end{theorem}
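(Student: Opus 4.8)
The plan is to obtain Theorem~\ref{Correct} as a synthesis of the three lemmas already established, together with the exact formula of Gon\c{c}alves et al.\ quoted in the introduction. The statement has two clauses — that $D^{out}$ is an \emph{optimal} dominating set, and that it is computable in $O(\text{size of answer})$ time — and I would treat them separately, since the first is essentially a cardinality comparison while the second is an algorithmic claim. First I would record that Lemma~\ref{domination} already guarantees that every vertex of $G_{m,n}$ lies in $D^{out}$ or is adjacent to a vertex of $D^{out}$, so $D^{out}$ is at least \emph{a} dominating set; the only content left in the first clause is minimality.

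To settle minimality I would invoke Lemma~\ref{number}, which gives
\[
|D^{out}| = \left\lfloor \frac{(n+2)(m+2)}{5} \right\rfloor - 4,
\]
and compare it with the value $\gamma(G_{m,n}) = \lfloor (n+2)(m+2)/5 \rfloor - 4$ valid for $m,n \geq 16$ by the theorem of Gon\c{c}alves et al. Because these two quantities coincide, $D^{out}$ is a dominating set whose size already meets the global lower bound $\gamma(G_{m,n})$; hence no dominating set can be strictly smaller, and $D^{out}$ is optimal. This disposes of the first clause with no further work.

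For the complexity clause I would be careful not to lean on the mere equality $|D^{out}| = \gamma(G_{m,n})$, since that bounds only the \emph{size} of the output and not the cost of producing it. Instead I would argue from the explicitness of the construction in Section~\ref{construct}: the offsets $a_p$ are given by a closed-form residue computation, and each of the sets $D_F$, $D_M$, $D_L$ and $D_s^{FR}, D_s^{FC}, D_s^{LC}, D_s^{LR}$ is a union of a constant number of arithmetic progressions $A_k^{(i,j)}$ of common difference $5$, whose first terms and lengths are fixed by $n \bmod 5$, $m \bmod 5$, $S = \lfloor n/5 \rfloor$ and $T = \lfloor m/5 \rfloor$. After an $O(1)$ preprocessing step computing these residues, $a_1$, $S$ and $T$, each vertex of $D^{out}$ is emitted by a single arithmetic increment, so the enumeration costs $O(1)$ per output vertex and $O(|D^{out}|)$ in total, as asserted by Lemma~\ref{time}. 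The theorem then follows by combining Lemmas~\ref{domination}, \ref{number}, \ref{time} and the Gon\c{c}alves formula. The main obstacle is precisely here: Lemma~\ref{time} as stated only cites the cardinality, so the honest justification of the running time requires spelling out that the construction is a constant number of index-$5$ progressions enumerable in constant amortized time — this is the step I would make fully explicit rather than the formal optimality argument, which is immediate once the numbers agree.
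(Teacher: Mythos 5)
Your proposal follows essentially the same route as the paper, which proves Theorem~\ref{Correct} simply by combining Lemma~\ref{domination} (domination), Lemma~\ref{number} (cardinality matching the Gon\c{c}alves et al.\ value of $\gamma(G_{m,n})$, giving optimality), and Lemma~\ref{time} (running time). Your one genuine improvement is the complexity clause: the paper's proof of Lemma~\ref{time} indeed justifies the $O(\text{size of answer})$ bound only by citing $|D^{out}|=\gamma(G_{m,n})$, whereas you correctly observe that this bounds the output size rather than the computation cost, and you supply the missing argument --- that $D^{out}$ is a union of constantly many arithmetic progressions of difference $5$, enumerable in $O(1)$ amortized time after $O(1)$ preprocessing --- which is exactly what the paper leaves implicit.
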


	\section{$[1,2]$-Domination number of Grid}
	
	In this section, we follow the construction is proposed in the Section \ref{construct} to obtain a $[1, 2]$-dominating set for grid $G_{m,n}$, where $16 \leq m \leq n$. It is not hard to investigate that $\gamma_{[1,2]}(G_{m,n})$ for  $n,m\leq 16$ by constructions are presented in \cite{chang1994domination}.
	We also show that   $\gamma_{[1,2]}(G_{m,n})=\gamma(G_{m,n})$ where $m,n\geq 16$ which is a positive answer to open question is posed in \cite{chellali20131}.

	\begin{theorem}
	Let $m,n\geq 16 $ and $G_{m,n}$ be a $m\times n$ grid, then
	\[\gamma_{[1,2]}(G_{m,n})=\gamma(G_{m,n}).\]
	\end{theorem}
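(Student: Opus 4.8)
The plan is to prove the two inequalities $\gamma(G_{m,n}) \le \gamma_{[1,2]}(G_{m,n})$ and $\gamma_{[1,2]}(G_{m,n}) \le \gamma(G_{m,n})$ separately. The first holds for every graph and is recorded in Section 2 from \cite{chellali20131}, so all the work lies in the reverse inequality. For that it suffices to exhibit a single $[1,2]$-dominating set whose cardinality equals $\gamma(G_{m,n})$, and the natural candidate is the set $D^{out}$ built in Section \ref{construct}. By Lemma~\ref{number} we already know $|D^{out}| = \gamma(G_{m,n})$, and by Lemma~\ref{domination} the set $D^{out}$ is dominating, so every vertex of $V \setminus D^{out}$ has at least one neighbour in $D^{out}$. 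Hence the whole theorem reduces to verifying the upper half of the $[1,2]$ condition: no vertex outside $D^{out}$ has three or more neighbours in $D^{out}$. Once this is established we obtain $\gamma_{[1,2]}(G_{m,n}) \le |D^{out}| = \gamma(G_{m,n})$, and combining with the general bound gives the claimed equality.

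To check $|N(v) \cap D^{out}| \le 2$ I would partition $V \setminus D^{out}$ according to the geometry already used in the construction. For an interior vertex $v_{p,q}$ of degree four that is not one of the four excluded inner-corner vertices $v_{2,2}, v_{2,n-1}, v_{m-1,2}, v_{m-1,n-1}$, the case analysis in the proof of Lemma~\ref{domination} shows that exactly one of its four neighbours is a black disk, the residue argument modulo $5$ ruling out the other three. Since the white squares all lie on the boundary $B(G)$, and for $m,n \ge 16$ such a vertex is adjacent to at most one boundary vertex (two boundary neighbours occur precisely at the four inner corners), it has at most one white-square neighbour. Therefore $v_{p,q}$ is dominated by at most one black disk and at most one white square, giving $|N(v_{p,q}) \cap D^{out}| \le 2$ at once.

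The remaining vertices, namely the four inner-corner vertices together with the boundary vertices of degree two or three, are where the real verification lives, and this is the step I expect to be the main obstacle. Here the bound can fail a priori, because near a corner a black disk coming from the interior pattern, a white square from a column set $D_s^{FC}$ or $D_s^{LC}$, and a white square from a row set $D_s^{FR}$ or $D_s^{LR}$ could all land in the neighbourhood of a single vertex. I would treat these by a finite case analysis over the residues of $n$ and $m$ modulo $5$, exactly mirroring the five cases for $n \bmod 5$ and their $m \bmod 5$ sub-cases used to define $D_s^{FC}, D_s^{LC}$ and $D_s^{LR}$; the first row is independent of $m$ and can be read directly from Table~\ref{first row}. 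In each configuration one computes, from the explicit arithmetic progressions $A_k^{(i,j)}$ and the finitely many hand-placed correction vertices such as $v_{1,3}$ and $v_{1,n-2}$, the exact set of dominators of each boundary vertex and confirms it has size one or two. The crux is that these correction vertices were chosen so that the spacing of dominators along each border stays at the period of the domination pattern, so that no boundary vertex acquires a third neighbour in $D^{out}$; making this precise for all $25$ residue combinations, and in particular controlling the interaction of the three border sets at the four corners, is the bulk of the argument.
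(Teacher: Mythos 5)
Your proposal follows essentially the same route as the paper: both reduce the theorem to the general bound $\gamma(G)\le\gamma_{[1,2]}(G)$ plus a verification that the already-constructed set $D^{out}$, whose cardinality is $\gamma(G_{m,n})$ by Lemma~\ref{number}, dominates every vertex at most twice, with the interior handled by the exactly-one-black-disk residue argument of Lemma~\ref{domination} and the real work concentrated on the boundary and the four inner-corner vertices $v_{2,2}, v_{2,n-1}, v_{m-1,2}, v_{m-1,n-1}$ --- precisely the cases the paper disposes of via the spacing of white squares, Table~\ref{Last}, and the corner figures. Your residue-by-residue verification plan is just a more explicit organization of the paper's same case analysis (and your observation that a non-corner interior vertex has at most one boundary neighbour, hence at most one white-square dominator, is if anything a slightly cleaner packaging than the paper's blanket claim that white squares are pairwise far apart), so no genuinely different idea or gap is involved.
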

	\begin{proof}	
	We show that every vertex $v_{p,q}\in V$ is dominated by at most two vertices in $D^{out}$ and according to the $\gamma_{[1,2]}(G)\geq \gamma(G)$, the result is obtained.
	
	In proof of Lemma \ref{number}, we show that every vertex of sub-grid is dominated exactly by one black disk. Also,  the distance between every two white squares is at least $5$. Then every vertex $v_{p,q}$ is dominated at most twice.

     In Table \ref{Last},  the white squares in  $D^{out}$ that dominate vertices $v_{2,2},v_{2,n-1},v_{m-1,2}$ and $v_{m-1,n-1}$ are shown.

  \begin{table}[h] 
   \caption{White squares that dominate $\{v_{2,2},v_{2,n-1},v_{n-1,2},v_{n-1,n-1}\}$}
  \centering
  \resizebox{\textwidth}{!}{
    			\begin{tabular}{|c|c|c|c|c|c|}
    				\hline  \diagbox[dir=NW]{n}{m} & $5k$ & $5k+1$ & $5k+2$ & $5k+3$ & $5k+4$  \\  \hline 
    			
    				 $5l$ & $\{v_{2,1},v_{1,n-1},v_{n,2},v_{n-1,n}\}$ & $\{v_{2,1},v_{1,n-1},v_{n-1,n}\}$ & $\{v_{2,1},v_{1,n-1},v_{n,2},v_{n,n-1}\}$ & $\{v_{2,1},v_{1,n-1},v_{n-1,1}\}$ & $\{v_{2,1},v_{1,n-1},v_{n-1,1},v_{n,n-1}\}$ \\  \hline 
    				 $5l+1$ & $\{v_{1,2},v_{n-1,n}\}$ & $\{v_{1,2},v_{n,n-1}\}$ & $\{v_{1,2},v_{n,2}\}$  & $\{v_{1,2},v_{n,n-1}\}$ & $\{v_{1,2},v_{n-1,n}\}$ \\  \hline
    				 $5l+2$ &  $\{v_{2,1},v_{n,2},v_{n-1,n}\}$ & $\{v_{2,1}\}$ & $\{v_{2,1},v_{n,2}\}$ &$\{v_{2,1},v_{n-1,1},v_{n,n-1}\}$ &$\{v_{2,1},v_{n-1,1},v_{n-1,n}\}$ 
    				  \\  \hline
    				  	$5l+3$  &  $\{v_{n,2},v_{n-1,n}\}$  &  $\{v_{n-1,1},v_{n,n-1}\}$ &  $\{v_{n-1,1}\}$ &  $\{v_{n,2},v_{n,n-1}\}$ &  $\{v_{n-1,n}\}$ \\  \hline 
    				  	$5l+4$  & $\{v_{1,2},v_{n-1,1},v_{n-1,n}\}$  & $\{v_{1,2},v_{n,2},v_{n,n-1}\}$  & $\{v_{1,2}\}$ & $\{v_{1,2},v_{n,2},v_{n,n-1}\}$ & $\{v_{1,2},v_{n-1,1},v_{n-1,n}\}$\\ 
    				  \hline 
    			\end{tabular}} 
    	 
   \end{table}\label{Last}

    By Table \ref{Last}, it can be seen that none of white square pairs $\{v_{1,2},v_{2,1}\}$, $\{v_{2,n},v_{1,n-1}\}$, $\{v_{n-1,1},v_{n,2}\}$ and $\{v_{n-1,n},v_{n,n-1}\}$
     appears in any cell. So the vertices  $\{v_{2,2},v_{2,n-1},v_{n-1,2},v_{n-1,n-1}\}$ are not dominated more than two times.
    
    These claims are appeared in Figures \ref{fig:LU} and \ref{fig:RU}. In fact, Figures \ref{fig:LU-a1}-\ref{fig:LU-a4}, show that the vertex $v_{2,2}$ is dominated at most twice according to different $a_1$. Since selecting the domination vertecis at the right-up corner depend on $a_1$ and $n$. Therefor two cases occur, as can be seen in Figures \ref{fig:RU-n0} and \ref{fig:RU-n1234}. These figures show that the vertex $v_{2,n-1}$ is dominated at most twice.  For the other corners, we have a similar argument. 
    
    \begin{figure}[h!]    
            \centering
            \begin{subfigure}[h]{0.23\textwidth}
                \centering
                \includegraphics[width=\textwidth]{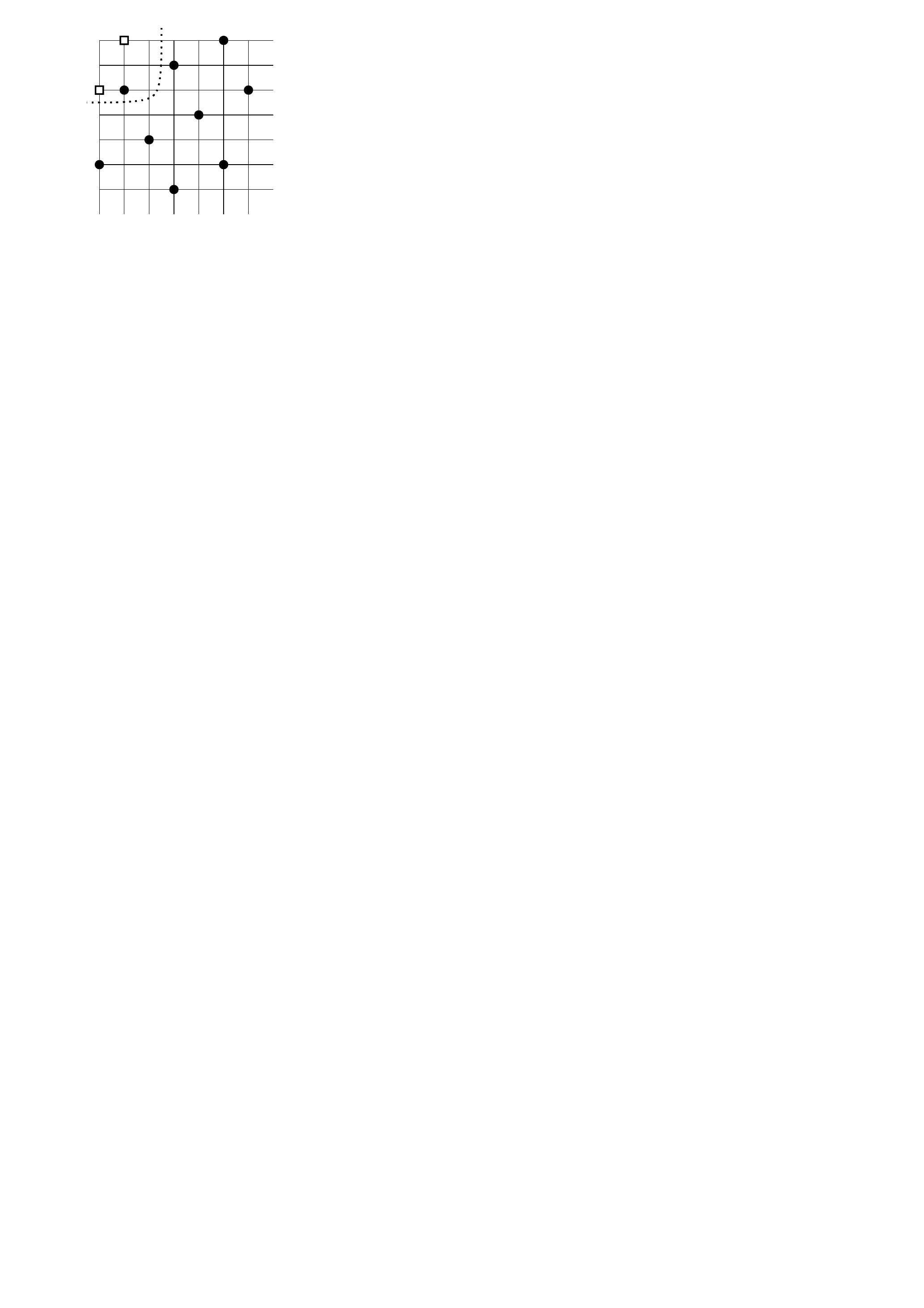}
                \caption{$a_1=1$}
                \label{fig:LU-a1}
            \end{subfigure}
            \hfill
            \begin{subfigure}[h]{0.23\textwidth}
                \centering
                \includegraphics[width=\textwidth]{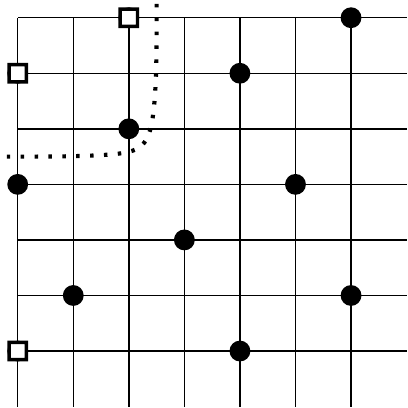}
                \caption{$a_1=2$}
                \label{fig:LU-a2}
            \end{subfigure}
            \hfill
            \begin{subfigure}[h]{0.23\textwidth}
                   \centering
                        \includegraphics[width=\textwidth]{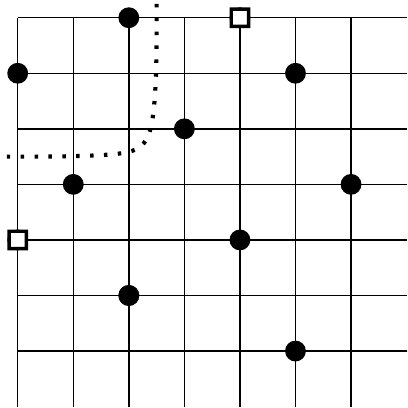}
                        \caption{$a_1=3$}
                        \label{fig:LU-a3}
                    \end{subfigure}
                    \hfill
                    \begin{subfigure}[h]{0.23\textwidth}
                        \centering
                        \includegraphics[width=\textwidth]{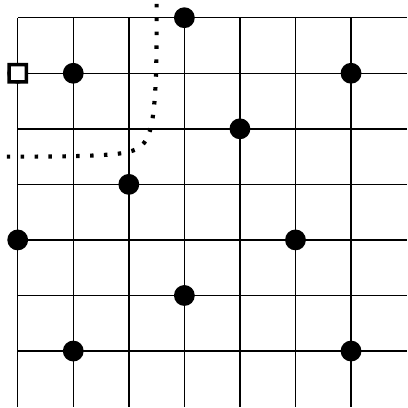}
                        \caption{$a_1=4$}
                        \label{fig:LU-a4}
                    \end{subfigure}
          
            \caption{Left-Up corner of grid}\label{fig:LU}
        \end{figure}
        
        \begin{figure}[h!]
            \centering
             \begin{subfigure}[h]{0.45\textwidth}
               \centering
               \includegraphics[width=.55\textwidth]{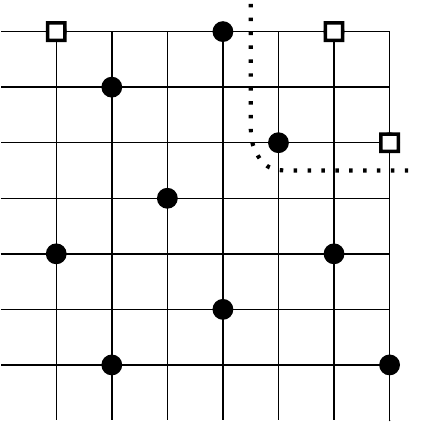}
                \caption{$n\equiv_{_5} 0$}
                \label{fig:RU-n0}
            \end{subfigure}
        ~
            \begin{subfigure}[h]{0.45\textwidth}
                \centering
                \includegraphics[width=.55\textwidth]{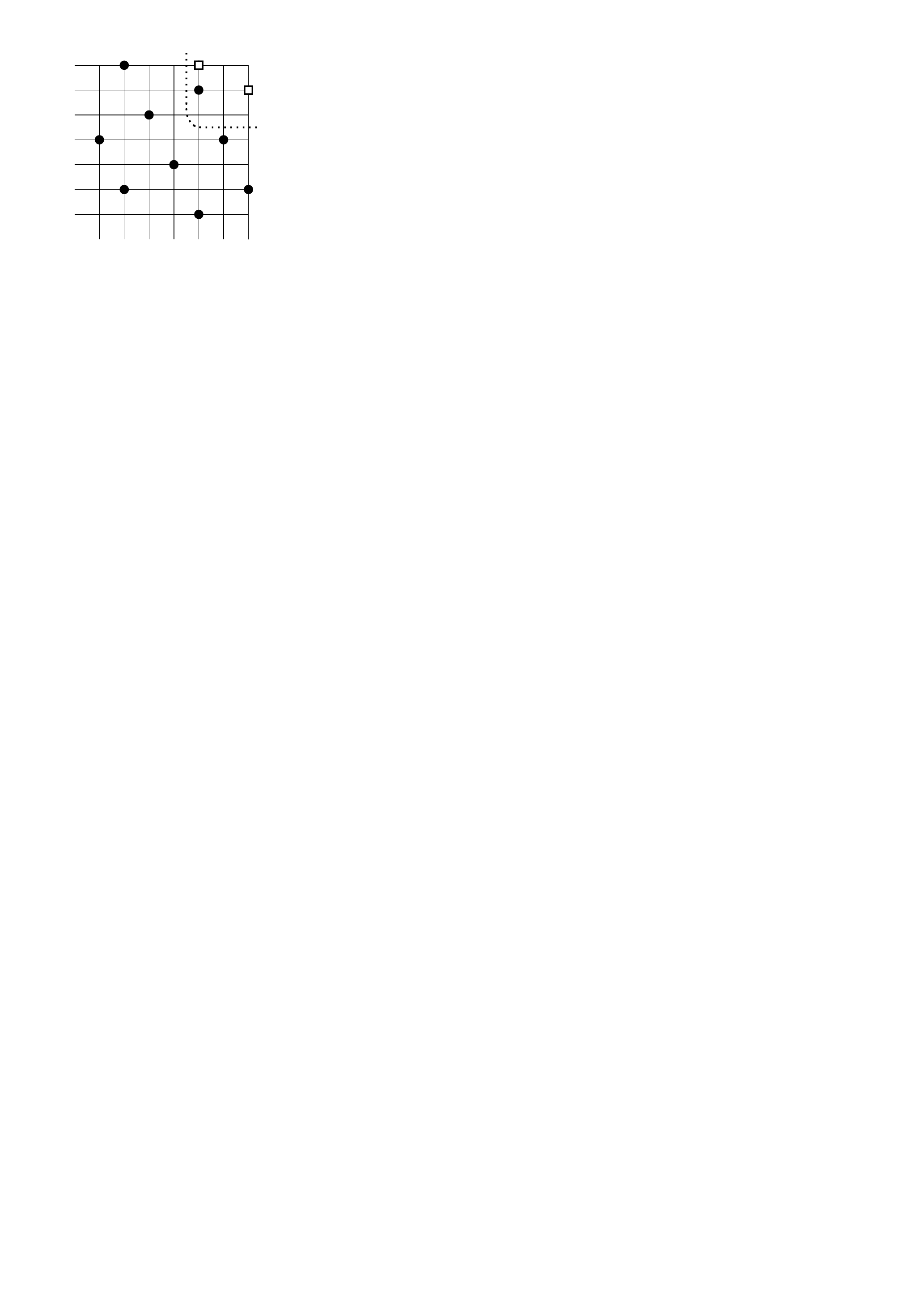}
                \caption{$n\not\equiv_{_5} 0$}
                \label{fig:RU-n1234}
            \end{subfigure}
            \hfill
            \caption{Right-Up corner of grid}\label{fig:RU}
        \end{figure}
        So, every vertex of $G_{m,n}$ is dominated at least one and at most twice by vertices of $D^{out}$.
	\end{proof}
	
	\begin{figure}[h]
		\centering
		\begin{subfigure}[h]{0.45\textwidth}
			\centering
			\includegraphics[scale=0.5]{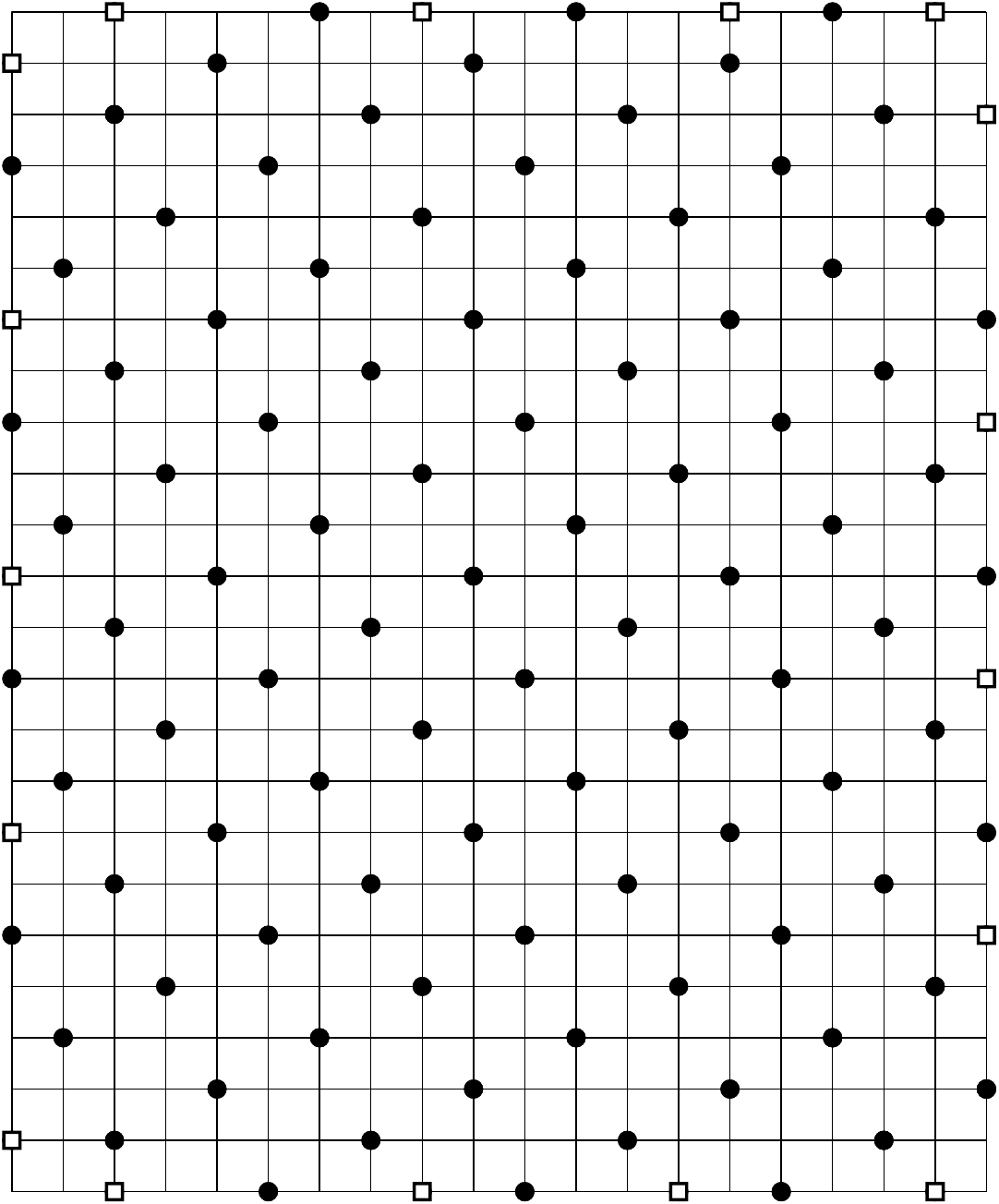}
			\caption{Dominationg set of size 110 in $G_{24,20}$}
			\label{fig:20*24}
		\end{subfigure}
		~
		\begin{subfigure}[h]{0.45\textwidth}
			\centering
			\includegraphics[scale=0.5]{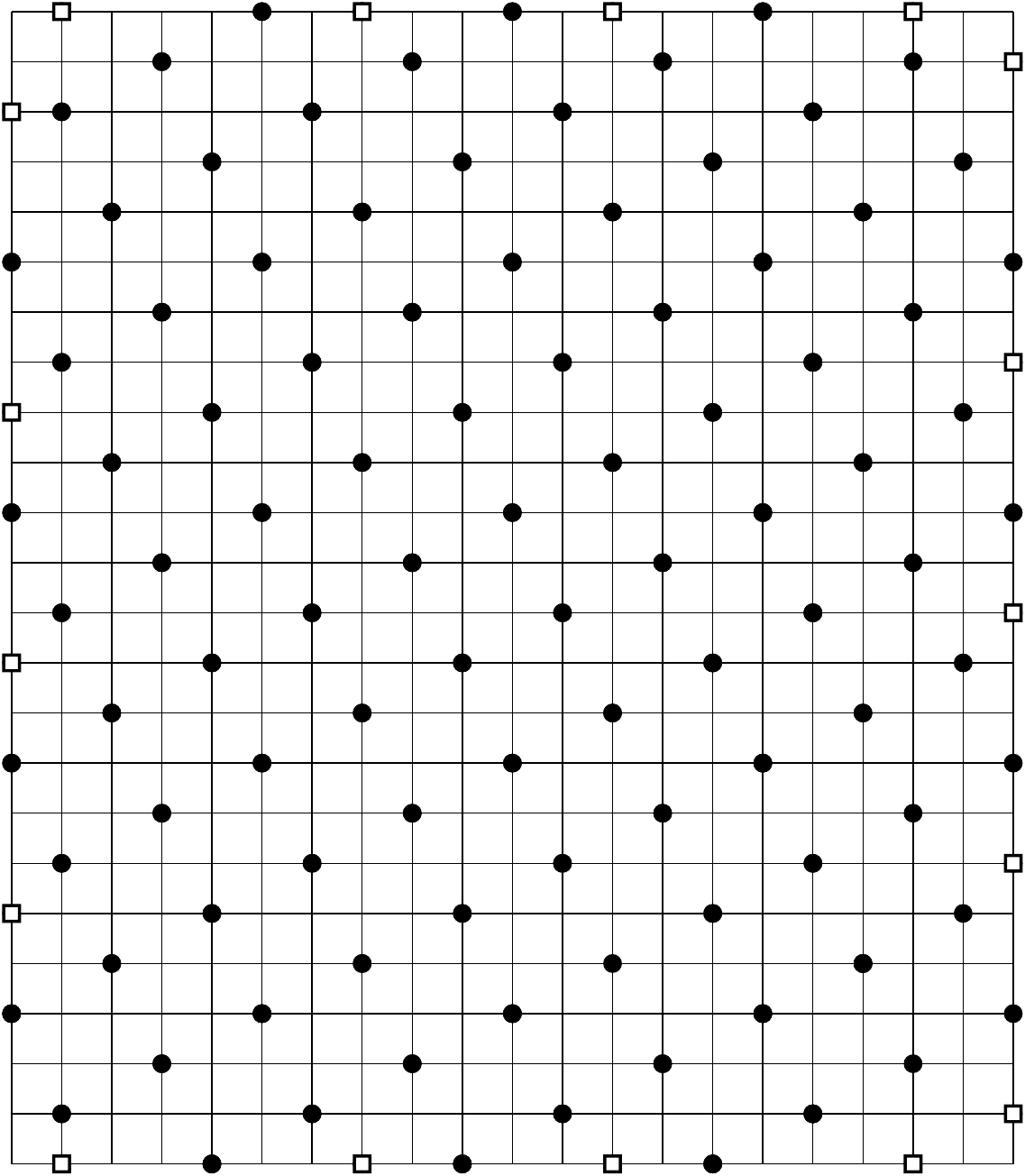}
			\caption{Dominationg set of size 115 in $G_{24,21}$}
			\label{fig:21*24}
		\end{subfigure}
		
		\centering
		\begin{subfigure}[h]{0.45\textwidth}
			\centering
			\includegraphics[scale=0.5]{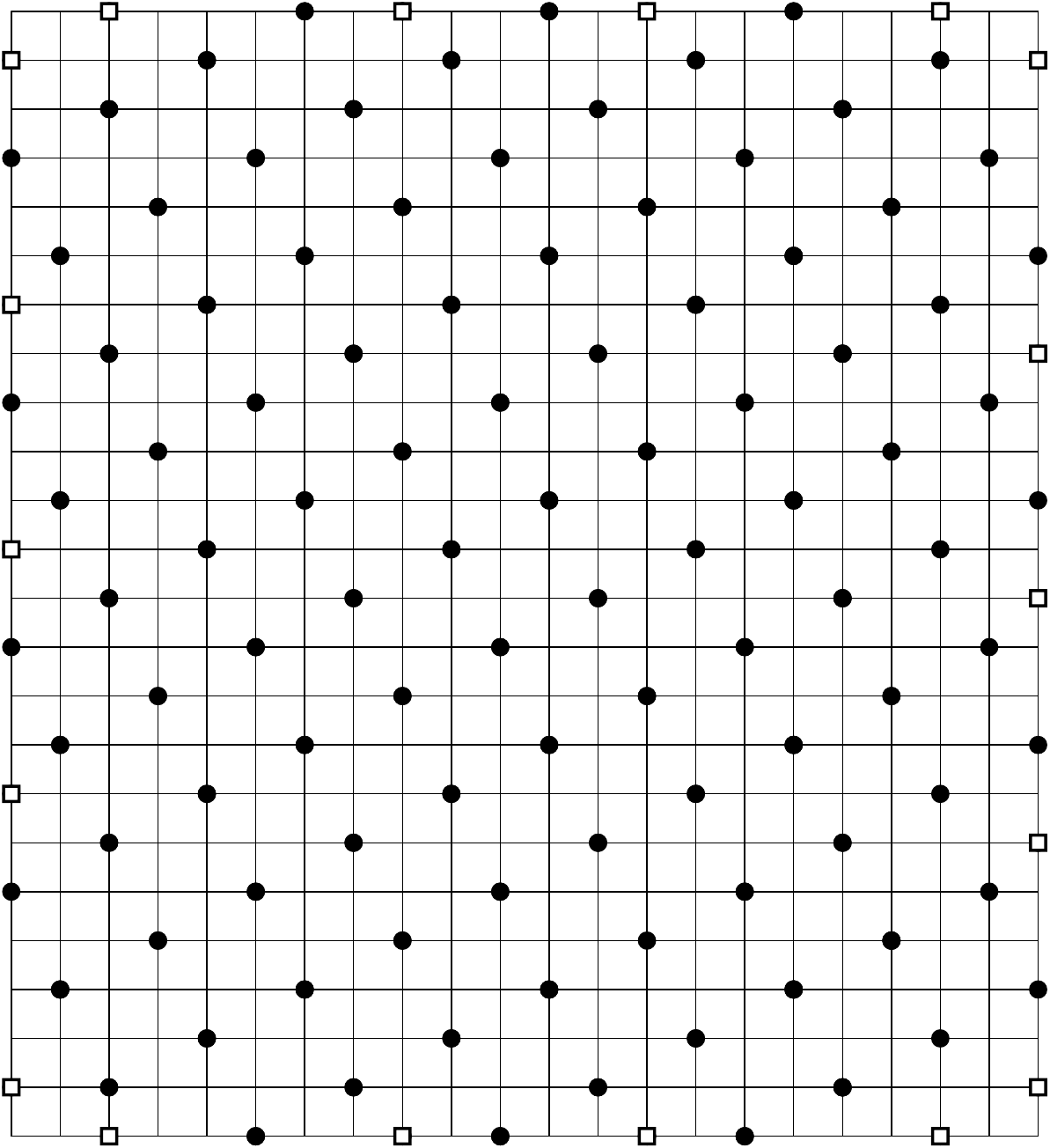}
			\caption{Dominationg set of size 120 in $G_{24,22}$}
			\label{fig:22*24}
		\end{subfigure}
		~
		\begin{subfigure}[h]{0.45\textwidth}
			\centering
			\includegraphics[scale=0.5]{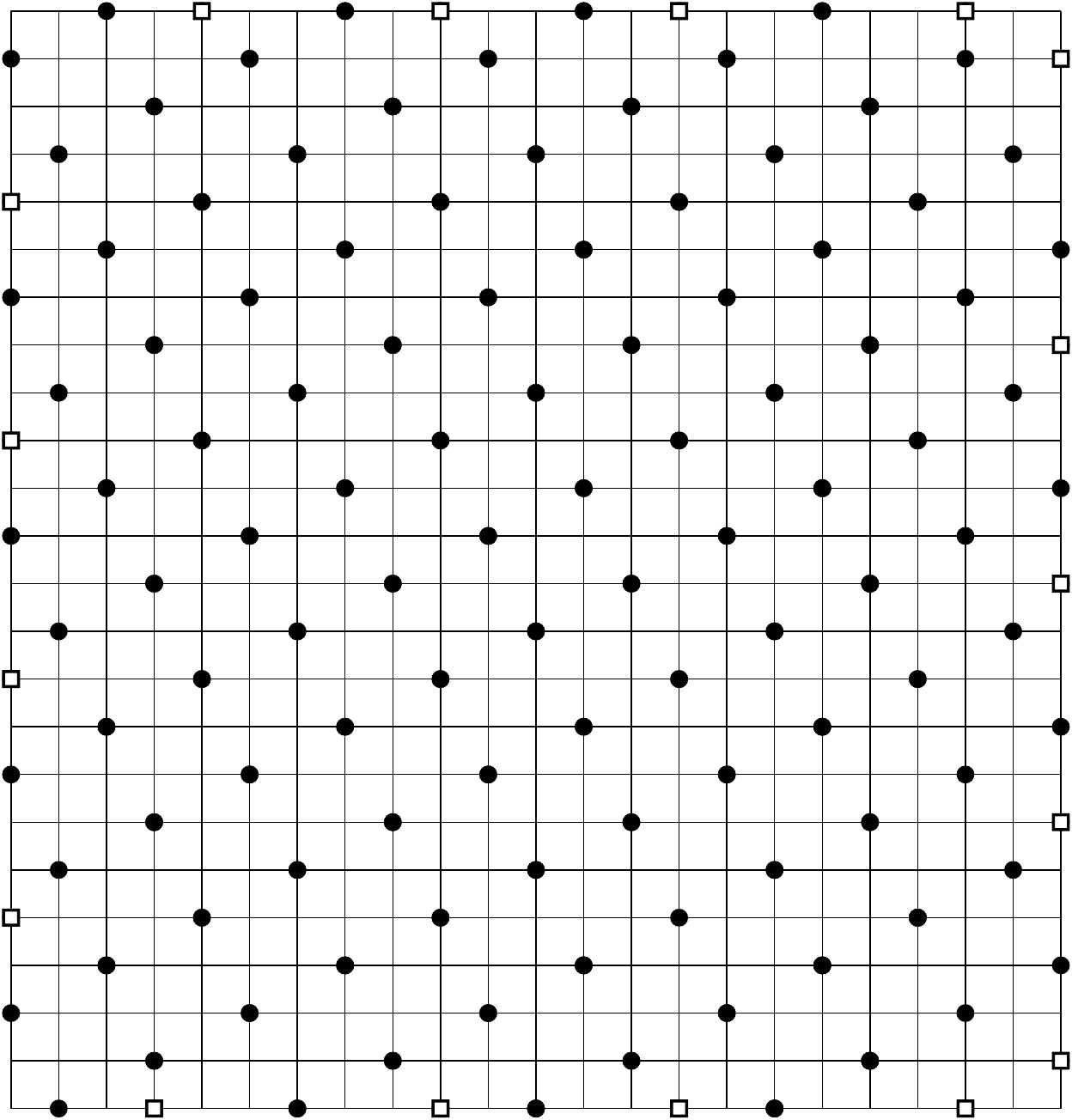}
			\caption{Dominationg set of size 126 in $G_{24,23}$}
			\label{fig:23*24}
		\end{subfigure}
		\hfill
		\begin{subfigure}[h]{0.45\textwidth}
			\centering
			\includegraphics[scale=0.5]{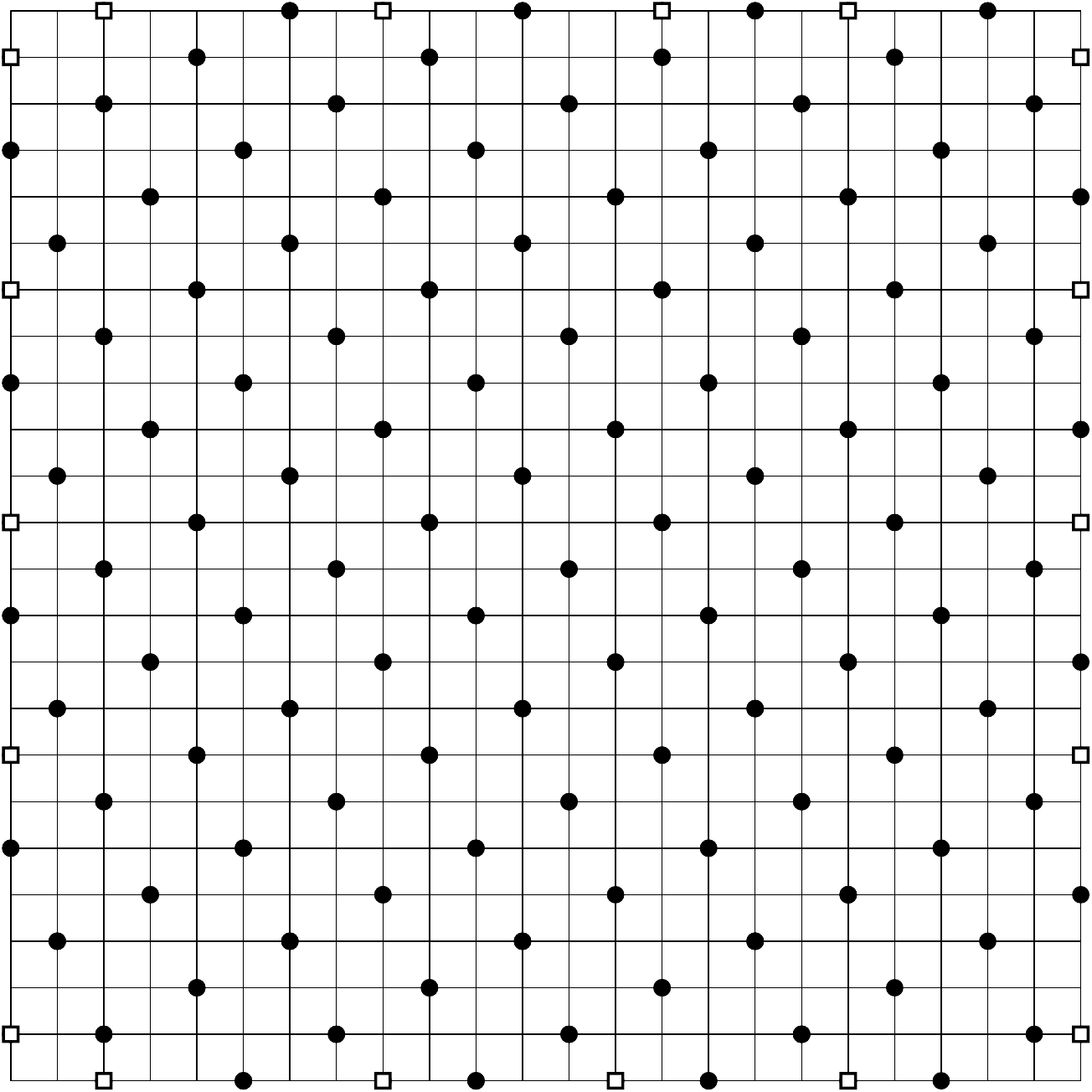}
			\caption{Dominating set of size 131 for grid $G_{24,24}$}
			\label{fig:24*24}
		\end{subfigure}
		
		\caption{Examples of some grids}
	\end{figure}


	\end{document}